\begin{document}

\title{Adiabatic Optimization Versus Diffusion Monte Carlo}

\newcommand{\quics}{Joint Center for Quantum Information and Computer Science, University of Maryland}

\author{Michael Jarret}
\affiliation{\quics}
\affiliation{Department of Physics, University of Maryland, College Park}
\author{Stephen P. Jordan}
\affiliation{\quics}
\affiliation{National Institute of Standards and Technology, Gaithersburg, MD}
\author{Brad Lackey}
\affiliation{\quics}
\affiliation{Department of Mathematics, University of Maryland, College Park}
\affiliation{National Security Agency, Ft. G. G. Meade, MD}

\date{\today}

\newcommand{\id}{\mathds{1}}
\newcommand{\eq}[1]{(\ref{#1})}
\newcommand{\bra}[1]{\langle #1|}
\newcommand{\ket}[1]{|#1\rangle}
\newcommand{\braket}[2]{\langle #1|#2\rangle}
\renewcommand{\th}{^\mathrm{th}}

\newtheorem{proposition}{Proposition}
\newtheorem{hypothesis}{Hypothesis}
\newtheorem{lemma}{Lemma}

\begin{abstract}
Most experimental and theoretical studies of adiabatic optimization
use \emph{stoquastic} Hamiltonians, whose ground states are
expressible using only real nonnegative amplitudes. This raises a
question as to whether classical Monte Carlo methods can simulate
stoquastic adiabatic algorithms with polynomial overhead. Here, we
analyze diffusion Monte Carlo algorithms. We argue that, based on
differences between $L_1$ and $L_2$ normalized states, these algorithms
suffer from certain obstructions preventing them from efficiently
simulating stoquastic adiabatic evolution in generality. In practice
however, we obtain good performance by introducing a method that we
call Substochastic Monte Carlo. In fact, our simulations are good
classical optimization algorithms in their own right, competitive with
the best previously known heuristic solvers for MAX-$k$-SAT at
$k=2,3,4$.
\end{abstract}

\maketitle

\section{Introduction}

While adiabatic quantum computation using general Hamiltonians has
been proven to be universal for quantum computation \cite{ADKLLR07},
the vast majority of research so far, both experimental and
theoretical, focuses on Hamiltonians in which all off-diagonal matrix
elements are nonpositive. Such Hamiltonians were named
\emph{stoquastic} in \cite{BDOT08}. By the Perron-Frobenius theorem,
the ground state of a stoquastic Hamiltonian can always be expressed
using only real nonnegative amplitudes. Consequently, in adiabatic
computations, which stay in the ground state, interference
effects are not manifest if the Hamiltonian is stoquastic. This raises
some question as to whether adiabatic computation in the ground state
of stoquastic Hamiltonians is capable of exponential speedup over
classical randomized algorithms. Complexity-theoretic evidence
obtained so far suggests that adiabatic quantum computation with
stoquastic Hamiltonians is less powerful than universal quantum
computers \cite{BDOT08, BT09} but does not resolve this
question. Conventional wisdom among Monte Carlo practitioners states
that Monte Carlo simulations of stoquastic adiabatic computation will
not suffer from the sign problem and will therefore converge
efficiently. If this could be turned into a theorem it would prove
that stoquastic adiabatic computers are incapable of exponential
speedup over classical computation.

Two of the major classes of Monte Carlo simulation algorithms are path
integral Monte Carlo and diffusion Monte Carlo. In 2013, Hastings
constructed a class of examples in which path integral Monte
Carlo fails to efficiently simulate stoquastic adiabatic dynamics
due to topological obstructions \cite{Hastings}. Diffusion Monte Carlo
algorithms should not be affected by topological obstructions. We
nevertheless find examples for which a nontopological
obstruction prevents a diffusion Monte Carlo algorithm from efficiently simulating
a stoquastic adiabatic process. On typical instances such simulations
may nevertheless work well. In fact, we introduce a variant of
diffusion Monte Carlo, which we call Substochastic Monte Carlo (SSMC),
tailored to simulating stoquastic adiabatic processes. In practice, we
find that this performs sufficiently well that SSMC
simulations of adiabatic optimization are good classical optimization
algorithms in their own right, competitive with the best previously
known heuristic solvers for MAX-$k$-SAT at $k=2,3,4$. Source code for
our implementation is available at \cite{github}.

The relationship of quantum adiabatic optimization
\cite{Farhi_science} and quantum annealing \cite{FGSSD94} to classical
optimization heuristics and simulation methods has garnered a lot of
attention. In particular, the most direct classical competitors to
quantum adiabatic optimization appear to be gradient descent and
simulated annealing, path integral Monte Carlo, and diffusion Monte
Carlo. One can analytically address the performance of adiabatic optimization
algorithms through adiabatic theorems \cite{JRS07, Elgart_Hagedorn},
which show that the runtime of adiabatic algorithms corresponding to
Hamiltonians with eigenvalue gap $\gamma$ is upper bounded by
$O(1/\gamma^2)$. Examples have been constructed in which adiabatic
optimization exponentially outperforms simulated annealing and
gradient descent \cite{FGG02}. Conversely, problems exist that can be
solved in polynomial time by gradient descent but which have a
corresponding exponentially small eigenvalue gap \cite{JJ15}. In the
examples of \cite{Boixo2, Boixo3, Boixo4}, runtimes for path integral
Monte Carlo and quantum adiabatic optimization have the same
asymptotic scaling. However, the obstructions of \cite{Hastings} show
that there exist cases where adiabatic optimization exponentially
outperforms path integral Monte Carlo. Other informative analytical
results on the effect of local minima on the performance of adiabatic
optimization are given in \cite{R04, DMV01, VDV03, Amin, aminchoi,
  Boixo1, BvD16, CH16}. Experimental, numerical, and analytical
evidence regarding the performance of quantum adiabatic optimization
on combinatorial optimization problems such as MAX-SAT can be found in
\cite{Farhi_science, AKR10, diffinit, McGeoch, Dwave}. Some
variants of the standard adiabatic optimization algorithm involving
non-linear interpolation or non-stoquastic Hamiltonians are analyzed
in \cite{diffmid2, diffmid1}.

\section{Terminology and Notation}

Let $G = (V,E)$ be a graph with vertex set $V$ and edge set $E
\subseteq V \times V$. We presently restrict our attention to
unweighted undirected graphs without self-loops. (Generalizing to
weighted graphs is easy, however uninstructive for our current
purposes.) By $d_j$ we denote the degree of vertex $j \in V$, that is,
the number of edges with an endpoint $j$. We use $L^{(G)}$ to denote
the combinatorial Laplacian of $G$, which is a $|V| \times |V|$ matrix
given by
\begin{equation}
L^{(G)}_{ij} = \left\{ \begin{array}{rl}
-1 & \textrm{if $(i,j) \in E$} \\
d_i & \textrm{if $i = j$} \\
0 & \textrm{otherwise.}
\end{array} \right.
\end{equation}
As an example, consider the $n$-dimensional hypercube graph $H_n$. The
$2^n$ vertices of $H_n$ may be labeled by bitstrings, and an edge
connects a pair of bitstrings if and only if they differ by one
bit. We write the corresponding combinatorial Laplacian in terms of the
Pauli operators as
\begin{equation}
\label{HyperPauli}
L^{(H_n)} = n \id - \sum_{k=1}^n X_k.
\end{equation}
Here and throughout we use $X_k,Y_k,Z_k$ to denote the Pauli $x,y,z$
operators acting on qubit $k$ (tensored with identity on the remaining
qubits). Aside from the (inconsequential) energy offset $n\id$, the
Laplacian $L^{(H_n)}$ is the most commonly used driving term for
adiabatic optimization algorithms.

\section{Substochastic Monte Carlo}

By Substochastic Monte Carlo (SSMC) we denote the class of classical
population algorithms that simulate a time-dependent heat diffusion
process driven by the same operator as the stoquastic adiabatic
process. The particular variant of SSMC studied below can be viewed as
either a form of diffusion Monte Carlo \cite{Grimm_Storer} or as a
particular generalization of the ``Go-With-The-Winners'' algorithm of 
\cite{AV94}. Consider a parametrized family $H(s)$ of stoquastic Hamiltonians,
which define an adiabatic algorithm by slowly varying $s$ from 0
to 1 according to some schedule $s(t)$. The corresponding
imaginary-time dynamics
\begin{equation}
\label{diffueq}
\frac{d}{dt} \psi = - H(s(t)) \psi
\end{equation}
is a continuous time diffusion process. Correspondingly, at any given
$s$, the time-evolution operator $e^{-H(s)\Delta t}$ is a
substochastic matrix for any sufficiently small positive time step
$\Delta t$. Such substochastic matrices (\emph{i.e.} square matrices with
  nonnegative entries such that each column sum is at most 1) generate
substochastic random processes, in which total probability
decreases. The lost probability is taken to represent the chance
that the process stops or ``dies''. SSMC is a Markov-Chain Monte Carlo
method in which a population of random-walkers approximates the
diffusion process on the graph given by this substochastic process
conditioned on not having died.

While SSMC works with any family of stoquastic Hamiltonians, it is
easiest to describe for 
$$H(s) = a(s)L + b(s)W,$$
where $L$ is a combinatorial graph Laplacian, $W =
\mathrm{diag}\{w_1,w_2,w_3,\ldots\}$ is a nonnegative diagonal
operator, and $a(s)$ and $b(s)$ are suitable nonnegative 
scalar functions. When this is the case we may write 
$$H(s) = a(s)(D - A) + b(s)W,$$
where $D$ is the diagonal operator of vertex degrees, and $A$ is the
adjacency matrix of the graph. For a sufficiently small time step we
approximate
$$e^{-H(s)\Delta t} \approx \id - H(s)\Delta t = (\id - a(s)\Delta t D - b(s)\Delta t W)
+ a(s)\Delta t A$$
which prescribes our transition probabilities. At time $t$,
the value $s = s(t)$ is computed according to the schedule, and a walker
on vertex $j$ would do precisely one of the following:
\begin{enumerate}
\item step to vertex $i\in V$ (where $(i,j)\in E$), each with probability $a(s)\Delta t$,
\item stay at vertex $j$ with probability $1 - a(s)\Delta t d_j - b(s)\Delta t w_j$,
\item or die with probability $b(s)\Delta t w_j$.
\end{enumerate}

The expected proportion of walkers that die in a given timestep is 
$b(s)\Delta t \langle W \rangle_t$ where the expectation is computed with 
respect to the current ensemble population distribution at time $t$.
The proportion of the walkers that survive is $1 - b(s)\Delta t
\langle W \rangle_t \approx e^{-b(s)\langle W\rangle_t\Delta
  t}$. Conditioning on survival renormalizes our ensemble
distribution, multiplying by $e^{b(s)\langle W\rangle_t\Delta
  t}$. Combining this with the above produces our combined
transition/renormalization matrix 
\begin{equation}\label{eqn:transition}
e^{b(s)\langle W\rangle_t\Delta t}\cdot e^{-H(s)\Delta t} 
\approx \left[\id - a(s)\Delta t D - b(s)\Delta t (W - \langle W \rangle_t)\right] + a(s)\Delta t A.
\end{equation}

If $s$ were constant in $t$, this substochastic process converges
to a quasistationary distribution (\emph{i.e.} a distribution that is
stationary except for exponentially decaying norm) that is
proportional to the ground state of the original stoquastic
Hamiltonian. Thus, one can attempt to simulate stoquastic adiabatic
evolution using a substochastic classical random walk. The simplest
idea would be to initialize the walkers into the ground state
distribution of $H(0)$, which is typically the uniform distribution,
and then track the instantaneous quasistationary distribution, as $s$
slowly increases from zero to one, by executing the Markov chain 
$\prod_j \left( \id - H(s(t_j)) \Delta t_j \right)$. (Here $\Delta t_j$ is the size of
the $j^\mathrm{th}$ timestep and $t_j = \sum_{k=1}^j \Delta t_j$.) However, one
needs to introduce some process for replenishing the population of
walkers. Otherwise, after a short time there are no walkers left and
the simulation terminates. 

There are a number of potential ways to replenish the
walkers. We have found it effective to adaptively set an
energy threshold throughout the time evolution such that walkers on
sites with energy above the threshold are likely to die, whereas
walkers on sites with energy below the threshold are likely to spawn
offspring. According to \eq{eqn:transition} that threshold should be 
the mean energy of the population, $\langle W\rangle_t$. Specifically, 
in our scheme, at each timestep, a walker on vertex $j$:
\begin{enumerate}
\item steps to vertex $i\in V$ (where $(i,j)\in E$), each with probability $a(s)\Delta t$,
\item stays at vertex $j$ with probability 
$1 - a(s)\Delta t d_j - |b(s)\Delta t (w_j - \langle W \rangle_t)|$, or
\item dies or spawns a new walker based on remaining probability 
$|b(s)\Delta t (w_j - \langle W\rangle_t)|$.
\end{enumerate}
We must have $0 \leq a(s) \Delta t d_j \leq 1$ for the probabilities
in Case 1 to make sense. A similar statement holds for Case 2, from
which we derive 
$$-a(s)\Delta t d_j \leq b(s)\Delta t (w_j - \langle W\rangle_s) \leq 1 - a(s)\Delta t d_j.$$
In particular, $b(s)\Delta t (w_j - \langle W\rangle_s) \in [-1,1]$ and we interpret Case 3 to be
\begin{enumerate}
\item[3a.] if $b(s)\Delta t (w_j - \langle W\rangle_s) > 0$ the walker dies with this probability, or
\item[3b.] if $b(s)\Delta t (w_j - \langle W\rangle_s) < 0$ then with probability $b(s)\Delta t (\langle W\rangle_s - w_j)$ the walker spawns an additional walker at vertex $j$.
\end{enumerate}
This choice of probabilities for spawning or dying ensures that the
quasistationary distribution is proportional to the ground state of
$H(s)$.

Note that the population size is itself a random variable. In theory,
the threshold between dying and spawning is $\langle W\rangle_t$, however
in practice one must adjust this to ensure the population
size stays sufficiently close to a nominal value. This
can be accomplished by introducing a feedback loop, which replaces
$\langle W\rangle_s$ with $\langle W\rangle_s - E$ for some energy
offset adaptively chosen based on the number of walkers. When the
population size dwindles below the target value, $E$ is
decreased. As one can see by examining the formulas defining 1, 2, 3a, 
and 3b, this replacement increases the likelihood for walkers to
spawn, thereby replenishing the population. Conversely, when the
number of walkers increases beyond the target population size, $E$ is
increased, thereby increasing the likelihood for walkers to die. 

Substochastic Monte Carlo can be viewed either as a method for
simulating stoquastic adiabatic computation or as a method for solving
discrete optimization problems. In the latter case, it is natural to
ask why $s$ must be varied at all. In the typical case, $a(1) = 0$ and
$b(1) = 1$. Thus, $H(1) = W = \mathrm{diag}\{w_1,w_2,w_3,\ldots\}$
where $w_1,w_2,\ldots$ is the objective function that we seek to
minimize. In this case, the pure diffusion process $\frac{d}{dt} \psi = -
H(1) \psi$ converges rapidly to the minimum energy state. However, in
typical problems a 
good approximation to this diffusion process can generally only be
achieved using an exponentially large population of walkers. A diagonal 
Hamiltonian $H(1) = W$ implies that the probability for a walker to hop between vertices in
SSMC is zero. The only remaining processes are death and spawning. If
at least one walker is sitting at a minimum energy site, then
death and spawning guarantee that the entire population converges to these sites, in agreement with the diffusion equation. 
However, in an optimization problem one does not initially know the optimum and therefore the initial distribution of walkers cannot depend upon knowledge of the minimum energy. For example, if
the problem has a unique minimum energy vertex, the uniform
distribution over all $2^n$ vertices has exponentially small overlap
with the quasistationary distribution of $H(1)$, which is supported
entirely by a single vertex. In this case, with only polynomially many
walkers,  it is exponentially unlikely that any walker is initially
placed at the solution, and, due to lack of hopping, no walker
arrives at the solution.

Intuitively, when SSMC is applied to optimization problems, the
sweeping of $H(s)$ from the graph Laplacian $L^{(G)}$
at $s=0$ to the diagonal matrix $W$ at $s=1$ serves a role loosely
analogous to decreasing temperature in simulated annealing. Initially, when $s$ is small, the population of walkers
explores widely. As $s$ is increased, the walkers become gradually
more focused around the regions of the search space where the
objective function has been found to take small values. If SSMC
successfully tracks the quasi-stationary distribution then, after a
given timestep, the walkers are distributed close to the
quasi-stationary distribution of the current Hamiltonian $H(s)$. This
then serves as the initial distribution for the next timestep with
Hamiltonian $H(s+\Delta s)$. If $s$ is varied sufficiently slowly,
then this initial distribution is close to the quasi-stationary
distribution of $H(s+ \Delta s)$, which facilitates convergence to the
new quasistationary distribution.

If the substochastic Monte Carlo simulation successfully simulates the
adiabatic process, then the final distribution of walkers will be
proportional to the final ground state, which in the case of $H(1) =
W$ has support only on the minimum energy vertex. For solving an
optimization problem defined by $W$, this is sufficient though
overkill; the optimum is found if at least one walker lands on the
minimum energy vertex.

\section{Non-Topological Obstructions}

In this section we present a pair of stoquastic adiabatic
processes which diffusion Monte Carlo algorithms such as SSMC will
fail to efficiently simulate (with a stronger notion of failure
in the second, more elaborate, example). Previously, \cite{Hastings}
gave examples of stoquastic adiabatic processes that have polynomial
eigenvalue gap but path integral Monte Carlo simulations of these
processes take exponential time to converge. Loosely speaking, the
failure of convergence was due to topological obstructions around
which the worldlines can get tangled. In diffusion Monte Carlo
algorithms, such as SSMC, there are no world lines, and
correspondingly no susceptibility to these topological
obstructions. Instead, our examples exhibit a different kind of
obstruction, exploiting the fact that diffusion Monte Carlo
simulations track the probability distribution proportional to the
ground state amplitudes rather than the squared amplitudes. Our
examples are inspired by the fourth counterexample given in \S 3.4 of
\cite{Hastings}, in which a discrepancy between the $L_1$ and
$L_2$-normalized wavefunctions is exploited to demonstrate exponential
convergence time for a path integral Monte Carlo simulation with open
boundary conditions.

For $s \in [0,1]$, let $H(s)$ be some stoquastic Hamiltonian acting on
a Hilbert space whose basis states can be equated with the vertices
$V$ of some graph. Let $\psi_s(x):V \to \mathbb{C}$ denote the ground state
of $H(s)$. Diffusion Monte Carlo algorithms (including SSMC) perform random walks designed to ensure that a population
of random walkers converges to the probability distribution
$p^{(1)}_s$ on $V$ directly proportional to the ground state
$\psi_s(x)$. That is, 
\begin{equation}
p^{(1)}_s(x) = \frac{\psi_s(x)}{\sum_{y \in V} \psi_s(y)}.
\end{equation}
The stoquasticity of $H(s)$ ensures that $\psi_s(x)$ is always real
and nonnegative, and consequently that $p^{(1)}_s$ is a valid
probability distribution. In contrast, the probability distribution
sampled from by performing a measurement on the quantum ground state
of the adiabatic process is
\begin{equation}
p^{(2)}_s(x) = \psi_x(x)^2.
\end{equation}

In exponentially large Hilbert spaces there can be vertices such that
$p^{(2)}_s(x)$ is polynomial but $p^{(1)}_s(x)$ is exponentially
small. The idea behind our examples is to exploit this discrepancy to
design polynomial-time stoquastic adiabatic processes that the
corresponding diffusion Monte Carlo simulations will fail to
efficiently simulate.\\
\\
\textbf{Example 0:} Consider the hypercube on $n$ qubits, and let $L$
be the hypercube graph Laplacian, as described in
\eq{HyperPauli}. Consider the stoquastic adiabatic Hamiltonian
\begin{equation}
\label{H0}
H_0(s) = \frac{1}{n} \left[ L + sbW \right]
\end{equation}
where $W$ is the Hamming weight potential. That is,
\begin{equation}
W \ket{x} = |x| \ \ket{x} \quad \textrm{for $x \in \{0,1\}^n$},
\end{equation}
where $|x|$ denotes the Hamming weight of bit string $x$, \emph{i.e.}
the number of ones. (In terms of Pauli
operators $W = \sum_{j=1}^n (\id - Z_j)/2$.) By the straightforward
calculation given in Appendix \ref{calc}, one finds that by choosing
\begin{equation}
\label{bval}
b = \frac{2}{\tan \left[ 2 \cos^{-1} \left( 1 - \frac{1}{4n} \right) \right]}
\end{equation}
one obtains a ground state probability distribution $p^{(2)}_{s=1}$ that has
\begin{equation}
\label{pzero}
p^{(2)}_{s=1}(0\ldots0) = \left( 1 - \frac{1}{4n} \right)^{2n}
\end{equation}
and therefore
\begin{equation}
\lim_{n \to \infty} p^{(2)}_{s=1}(0\ldots0)  = \frac{1}{\sqrt{e}}
\end{equation}
whereas the corresponding distribution of random walkers behaves as
\begin{equation}
p^{(1)}_{s=1}(0\ldots0) = O\left( e^{-\sqrt{n/2}} \right).
\end{equation}

The minimum eigenvalue gap of $H_0(s)$ occurs at $s=0$ and is equal to
$\frac{2}{n}$. Thus, this already constitutes an example where adiabatically
evolving according to $H_0(s)$ with $s$ varying from $0$ to $1$ over a
polynomial duration and then measuring in the computational basis yields
the minimum potential with constant probability, whereas diffusion
Monte Carlo algorithms have subexponentially small probability
of querying the minimum. This is not an especially compelling example,
because SSMC may nevertheless efficiently converge to the probability
distribution $p^{(1)}_s$. That is, Example 0 disproves the naive
hypothesis that if a measurement of an adiabatic process consistently
yields the minimum of a potential in polynomial time, then so does
SSMC. This, however, reflects only the exponential divergence in the
$L_1$ and $L_2$ norms and does not disprove the following more nuanced
hypothesis.

\begin{figure}
\begin{center}
\includegraphics[width=0.25\textwidth]{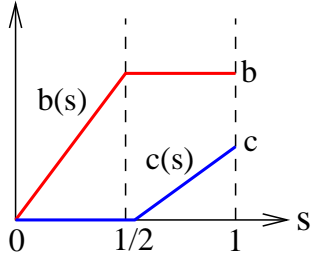}
\caption{\label{sched} The annealing schedule for Example 1
  recapitulates Example 0 from $s=0$ to $s=1/2$. Afterwards, from
  $s=1/2$ to $s=1$, the potential on the all-zeros bitstring is
  lowered by some amount $c$.}
\end{center}
\end{figure}

\begin{figure}
\begin{center}
\includegraphics[width=0.6\textwidth]{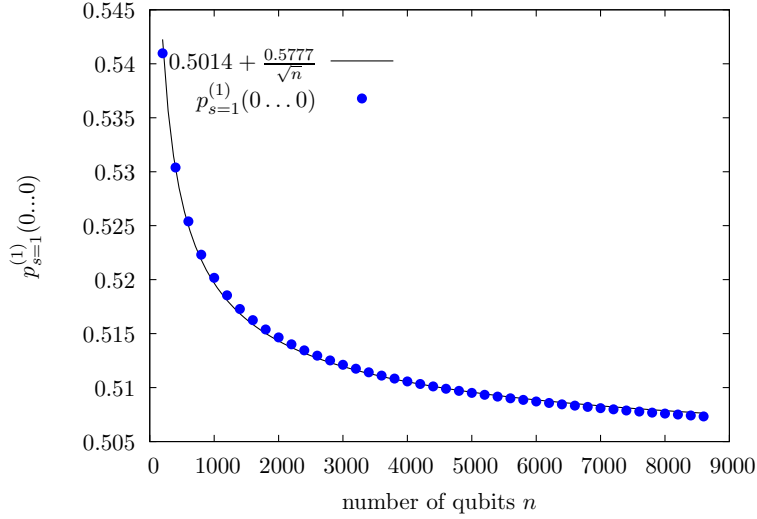}
\caption{\label{numerics} The ground state of $H_1(s)$ lies within the
  subspace of states that are invariant under all permutations of
  qubits. Consequently, the ground state can be obtained numerically
  up to large numbers of qubits, as described in appendix
  \ref{symmetry}. With our choices of $b$
  and $c$, we numerically find that $p_{s=1}^{(1)}(0\ldots0)$ is convergent to a
  constant (approximately equal to 1/2) in the limit $n \to \infty$.}
\end{center}
\end{figure}


\begin{hypothesis}
\label{hyp}
For all $s \in [0,1]$ let $H(s)$ be a stoquastic Hamiltonian with
ground state $\psi_s$ and eigenvalue gap $\gamma(s)$. Let $\gamma =
\min_{0 \leq s \leq 1} \gamma(s)$. There exist polynomials $p,q$ such
that with $p(\gamma, 1/\epsilon)$ timesteps and 
$q(\gamma, 1/\epsilon)$ walkers, SSMC tracks a
probability distribution $\epsilon$-close to $p^{(1)}_s$.
\end{hypothesis}

We can disprove this hypothesis with the following, slightly more
elaborate example.\\
\\
\textbf{Example 1:} Consider the Hamiltonian
\begin{equation}
\label{h1}
H_1(s) = \frac{1}{n} \left[ L + b(s) W \right] - c(s) P
\end{equation}
where $L$ and $W$ are as in Example 0 and $P =
\ket{0\ldots0}\bra{0\ldots0}$ is the projector onto the all zeros
bitstring. The ``annealing schedule'' for $s \in [0,1]$ is given by
\begin{eqnarray}
b(s) & = &\left\{ \begin{array}{ll}
2sb & s \leq 1/2 \label{schedule1}\\
b\phantom{(2s-1)c} & s > 1/2 \label{schedule2}
\end{array} \right.
\\
c(s) & = & \left\{ \begin{array}{ll}
0 & s \leq 1/2 \\
(2s-1)c\phantom{b} & s > 1/2
\end{array} \right.
\end{eqnarray}
as illustrated in Figure \ref{sched}. The constant $b$ is chosen as in
\eq{bval}.

As proven in Appendix \ref{gaproof}
\begin{equation}
\min_{1/2 \leq s \leq 1} \gamma(s) \simeq \frac{1}{\sqrt{2n}},
\end{equation}
for any $c \geq 0$. The spectrum for $s < 1/2$ recapitulates the
spectrum of example 0. Thus, the minimum eigenvalue gap over the full adiabatic
process occurs at $s=0$ and is given by $\gamma =
\frac{2}{n}$. Consequently, the quantum adiabatic implementation of this
process runs in polynomial time. By choosing $c$ sufficiently large we
can ensure that $p^{(1)}_{s=1}(0\ldots0)$ is
$\Omega(1)$. Specifically, one finds numerically that by choosing $c =
2$ one obtains
\begin{equation}
p^{(1)}_{s=1}(0\ldots0) \simeq 0.50 + \frac{0.58}{\sqrt{n}},
\end{equation}
as shown in Figure \ref{numerics}. Thus, to satisfy Hypothesis
\ref{hyp}, the walkers would have to end up at $s=1$ in a probability
distribution with probability approximately $1/2$ 
at the all zeros string. However, from the analysis of Example 0, we
know that at $s=1/2$ the walkers have a distribution in which the
probability to be at the all zeros string is of order
$e^{-\sqrt{n/2}}$. Thus, with high likelihood, no walkers will land on
the all zeros string until the number of timesteps $T$ times the
number of walkers $W$ approaches $TW \sim e^{\sqrt{n/2}}$. Until this
happens it is impossible for the distribution of walkers to be
affected by the change in the potential at the all zeros string that
is occurring from $s=1/2$ to $s=1$; no walkers have landed there, and
the diffusion Monte Carlo algorithm has therefore never queried the
value of the potential at that site.

\section{Empirical Performance of Substochastic Monte Carlo}

\begin{figure}[ht]
\includegraphics[width=0.6\textwidth]{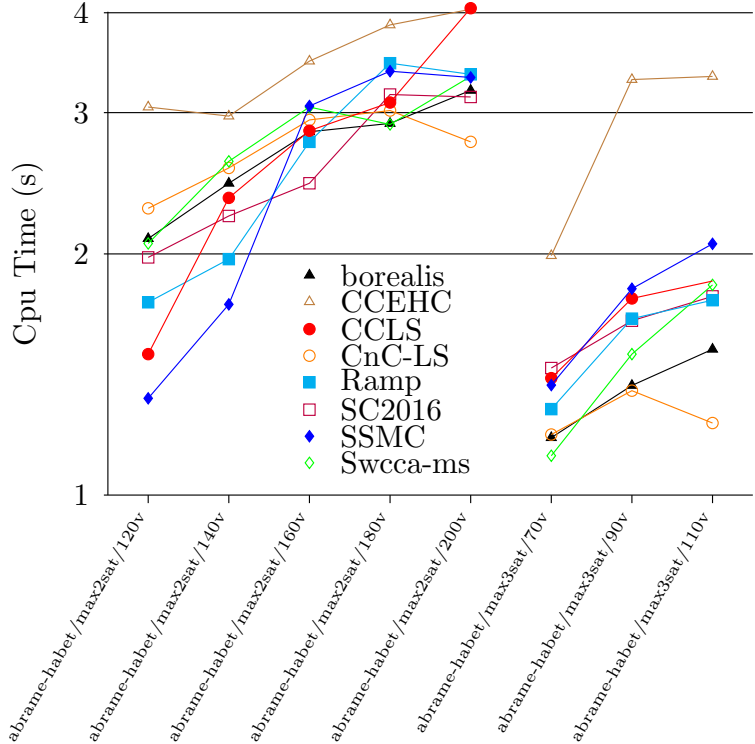}
\caption{Runtime comparison of several solvers from MAX-SAT 2016 contest.}
\label{fig:runtimes}
\end{figure}

\begin{figure}[ht]
\includegraphics[width=0.6\textwidth]{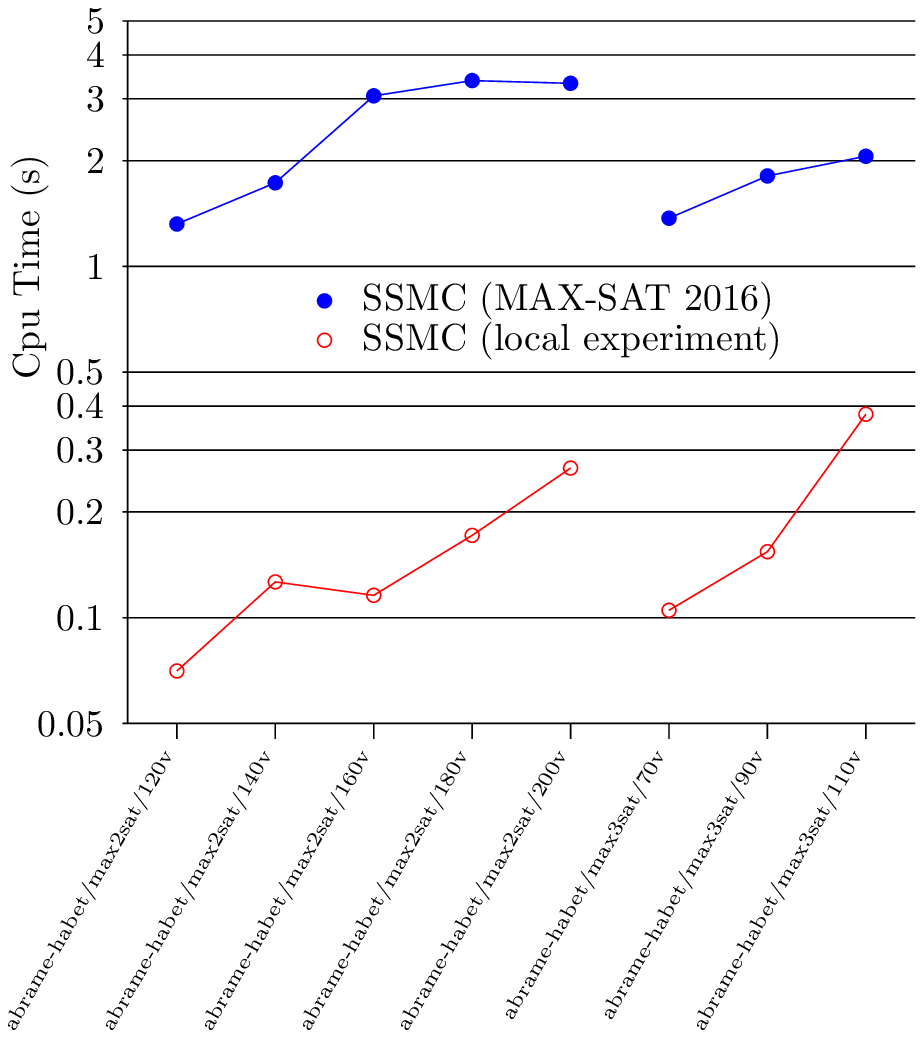}
\caption{Runtime of SSMC in MAX-SAT 2016 contest versus local experiments.}
\label{fig:runtimes2}
\end{figure}

While highly structured problems may lead to obstructions, as in the
previous section, unstructured and random problems are not likely to
see these. There are numerous benchmarks of random problems available;
here we provide results of SSMC and seven other solvers on a selection
of unweighted random MAX-SAT problems from the 2016 MAX-SAT evaluation \cite{MaxSat2016}. We omitted
solvers that did not solve every problem instance in these categories. Also
we do not show results on the high-girth examples of this benchmark as several of the algorithms (including SSMC) did not succeed at finding an
optimal solution for every instance. In Figure \ref{fig:runtimes}, we see comparable timings for all the solvers. There is a general upward trend in the MAX-2-SAT and MAX-3-SAT timings versus number of variables, but it is hard to discern the exponential behavior one would expect for solving MAX-SAT problems.

For SSMC, an exponential runtime was programmed explicitly. We selected $T = e^{0.022n + 5.9}$ for MAX-2-SAT and $T = e^{0.035n + 6.1}$ for MAX-3-SAT. A linear schedule was used, $a(s) = 1-s$ and $b(s) = s$ where $s = t/T$. This selection was based on tuning the parameters so as to maximize the
success rate using a constant number of walkers across all problem instances (namely, sixteen). The bulk of the work is in computing the potential of a walker (\emph{i.e.} the number of failed SAT clauses). An improved implementation of this computation, as well as optimization over the number of walkers and schedule, is expected to yield a better scaling. 

The SSMC contest timings in Figure \ref{fig:runtimes} do not appear consistent with exponential scaling, but a local experiment displays a clear exponential trend of SSMC consistent with our programmed runtimes, Figure \ref{fig:runtimes2}. Precisely the same codebase and benchmark problems were used. There were slight differences in hardware and compiler, locally LLVM 7.3.0 on an Intel Xeon 2.6GHz Mac Pro, while the contest utilized GCC 4.4.7 on an Intel Xeon 2.0GHz CentosOS Linux server. This seems unlikely to be the cause for this discrepancy in behavior, as \emph{borialis} also exhibited contest timings very different than those reported in \cite{zhu2016borealis}. 

We believe that whatever factor of the contest environment distorted
the runtimes of SSMC consistently affected the other
algorithms. Consequently, although we can conclude that SSMC is a
competitive solver for the MAX-SAT evaluation, we cannot confidently
extrapolate scaling from the contest results from which to compare
SSMC with other solvers. In particular, Figure \ref{fig:runtimes}
shows a negative slope between the $180$ and $200$ variable instances
(respectively abrame-habet/max2sat/180v and
abrame-habet/max2sat/200v). With the programmed runtimes, this is
highly unexpected and could not be locally reproduced, but was
consistent with the competition results for other solvers. In future
work, in order to clarify relative scaling behavior and improve the
optimization of SSMC, we plan run several of the solvers submitted to
the MAX-SAT competition against SSMC in a local environment that
captures timings whose interpretation is more clear.

\textbf{Acknowledgments:} We thank Aaron Ostrander for useful
discussions and Yi-Kai Liu for making us aware of
\cite{AV94}. MJ thanks Booz Allen Hamilton for support. Portions of
this paper are a contribution of NIST, an agency of the US government,
and are not subject to US copyright. 

\clearpage

\appendix

\section{Calculations for Example 0}
\label{calc}

We can re-express \eq{H0} as
\begin{eqnarray}
H_0(s) & = & \left( 1 + \frac{sb}{2} \right) \id - \frac{1}{n}
\sum_{j=1}^n \left( X_j + \frac{sb}{2} Z_j \right) \\
& = & \left(1 + \frac{sb}{2} \right) \id - \frac{1}{n} 
\sqrt{1+ \left( \frac{sb}{2} \right)^2} \sum_{j=1}^n \left( 
\sin(\theta) X_j + \cos(\theta) Z_j \right) \label{hthetan}
\end{eqnarray}
where
\begin{equation}
\label{theta}
\theta = \tan^{-1} \left( \frac{2}{sb} \right).
\end{equation}
From \eq{hthetan} one can see that the eigenvalue gap of $H_0(s)$ is
\begin{equation}
\label{rampgap}
\gamma(s) = \frac{2}{n} \sqrt{1 + \left( \frac{sb}{2} \right)^2}.
\end{equation}
Furthermore, the ground state of
\begin{equation}
H(\theta) = - \sin(\theta) X - \cos(\theta) Z
\end{equation}
is
\begin{equation}
\ket{\psi(\theta)} = \cos \left( \frac{\theta}{2} \right) \ket{0} +
\sin \left( \frac{\theta}{2} \right) \ket{1}.
\end{equation}
So, the ground state of $H_0(s)$ is $\ket{\psi(\theta)}^{\otimes n}$
with $\theta$ given by \eq{theta}. We choose $b$ so that at $s=1$ we
have $\cos(\theta/2) = 1-\frac{1}{4n}$. By \eq{theta} this entails
\begin{equation}
b = \frac{2}{\tan \left[ 2 \cos^{-1} \left( 1 - \frac{1}{4n} \right) \right]}.
\end{equation}
With this choice of $b$ we have
\begin{equation}
p^{(2)}_{s=1}(0\ldots0) = \psi_{s=1}^2(0\ldots0) = \cos \left(
  \frac{\theta}{2} \right)^{2n} = \left( 1 - \frac{1}{4n} \right)^{2n}
\end{equation}
which is asymptotically a constant, specifically converging to
$1/\sqrt{e}$ as $n \to \infty$. 

Now, consider the probability distribution sampled by the diffusion
Monte Carlo algorithm.
\begin{equation}
p^{(1)}_s(x) = \frac{1}{\mathcal{Z}_s} \sin(\theta/2)^{|x|} \cos(\theta/2)^{n-|x|}
\end{equation}
where $|x|$ is the Hamming weight if $x \in \{0,1\}^n$ and
\begin{eqnarray}
\mathcal{Z}_s & = & \sum_{x \in \{0,1\}^n} \sin(\theta/2)^{|x|} \cos(\theta/2)^{n-|x|} \\\
& = & \sum_{w = 0}^n \binom{n}{w} \sin(\theta/2)^w \cos(\theta/2)^{n-w} \\
& = & \left[ \sin(\theta/2) + \cos(\theta/2) \right]^n.
\end{eqnarray}
At $s=1$, $\cos(\theta/2) = 1-\frac{1}{4n}$ and $\sin(\theta/2) =
\sqrt{1-\left(1-\frac{1}{4n} \right)^2}$. So, for large $n$
\begin{eqnarray}
\mathcal{Z}_{s=1} & \simeq & \left[ 1 + \sqrt{\frac{1}{2n}} \right]^n \\
& = & e^{n \log ( 1 + 1/\sqrt{2n})} \\
& \simeq & e^{\sqrt{n/2}}.
\end{eqnarray}
Thus,
\begin{equation}
p^{(1)}_{s=1}(0\ldots0) = 
\frac{\cos(\theta_{s=1}/2)^n}{\mathcal{Z}_{s=1}} = \frac{\left( 1-
    \frac{1}{4n} \right)^n}{\mathcal{Z}_{s=1}}
\to \frac{e^{-1/4}}{e^{\sqrt{n/2}}}.
\end{equation}
at large $n$.

\section{Eigenvalue Gap Lower Bound for Example 1}
\label{gaproof}

Let $\gamma_1(s)$ be the eigenvalue gap of $H_1(s)$ as defined in
\eq{h1}. In this appendix we prove that $\min_{1/2 \leq s \leq 1}
\gamma_1(s)$ occurs at $s=\frac{1}{2}$. Note as $H_1(1/2) = H_0(1)$, the
eigenvalue gap of $H_1(1/2)$ can be obtained by substituting \eq{bval}
into \eq{rampgap} and expanding to lowest order in $1/n$, which yields
\begin{equation}
\label{gapval}
\gamma_1 (1/2) = \frac{1}{\sqrt{2n}} + O(n^{-3/2}).
\end{equation}

To prove that $\min_{1/2 \leq s \leq 1} \gamma_1(s)$ occurs at
$s=1/2$, we introduce the following lemma, which is physically
intuitive, and can be regarded as loosely analogous to Le Chatelier's
principle.

\begin{lemma} \label{lechatelier} Let $H(\alpha) = H_0 + \alpha V$ for
  any two Hermitian operators $H_0$ and $V$. Let
  $\ket{\psi_0(\alpha)}$ be the ground state of $H(\alpha)$, with
  energy $E_0(\alpha)$, which we assume to be nondegenerate. For any
  operator $M$ let 
  $\langle M \rangle_{\alpha} = \bra{\psi_0(\alpha)} M \ket{\psi_0(\alpha)}$.
  Then $\frac{d}{d \alpha} \langle V \rangle_{\alpha} \leq 0$ for all
  $\alpha$. Also, $\frac{d^2 E_0}{d \alpha^2} \leq 0$ for all
  $\alpha$.
\end{lemma}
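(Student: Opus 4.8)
The plan is to derive both inequalities from the Hellmann–Feynman theorem together with standard first-order perturbation theory, exploiting the fact that the ground state is nondegenerate so that all quantities vary smoothly in $\alpha$. The first observation is Hellmann–Feynman: since $H(\alpha) = H_0 + \alpha V$, we have $\frac{dH}{d\alpha} = V$, and hence $\frac{dE_0}{d\alpha} = \langle V \rangle_\alpha$. Differentiating once more gives $\frac{d^2 E_0}{d\alpha^2} = \frac{d}{d\alpha}\langle V \rangle_\alpha$, so the two claims of the lemma are literally the same statement, and it suffices to prove $\frac{d}{d\alpha}\langle V\rangle_\alpha \leq 0$.

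\textbf{Key steps.} First I would fix $\alpha$, write $E_0 = E_0(\alpha)$ and $\ket{\psi_0} = \ket{\psi_0(\alpha)}$, and let $\ket{\psi_k}$, $E_k$ denote the excited eigenpairs. Standard first-order perturbation theory gives the derivative of the ground state,
\begin{equation}
\frac{d}{d\alpha}\ket{\psi_0} = -\sum_{k \neq 0} \frac{\bra{\psi_k} V \ket{\psi_0}}{E_k - E_0}\ket{\psi_k},
\end{equation}
up to an (irrelevant, imaginary) phase term that can be gauged away and does not affect expectation values. Then I would compute
\begin{equation}
\frac{d}{d\alpha}\langle V\rangle_\alpha = \left(\frac{d}{d\alpha}\bra{\psi_0}\right) V \ket{\psi_0} + \bra{\psi_0} V \left(\frac{d}{d\alpha}\ket{\psi_0}\right),
\end{equation}
and substitute the perturbation-theory expression. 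Since $V$ is Hermitian and $E_0$ is the ground energy, so $E_k - E_0 > 0$ for all $k \neq 0$, each term in the resulting sum is of the form $-2\,|\bra{\psi_k} V \ket{\psi_0}|^2/(E_k - E_0) \leq 0$. Summing over $k \neq 0$ yields $\frac{d}{d\alpha}\langle V\rangle_\alpha \leq 0$, which is exactly the claim; by the Hellmann–Feynman identity above this also gives $\frac{d^2 E_0}{d\alpha^2} \leq 0$.

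\textbf{Main obstacle.} The computation itself is routine; the only genuine care needed is in justifying the differentiability of $\ket{\psi_0(\alpha)}$ and $E_0(\alpha)$ and the validity of the first-order perturbation formula. This is where the nondegeneracy hypothesis is essential: for a finite-dimensional Hermitian family $H(\alpha)$ that is (real-)analytic in $\alpha$, a simple eigenvalue and its eigenvector depend analytically on $\alpha$ (Rellich's theorem / Kato), so all the derivatives above exist and the sum-over-states formula is legitimate. I would state this as the standing assumption and cite it, rather than reproving it. A secondary subtlety is the phase freedom in $\ket{\psi_0(\alpha)}$: one should fix a smooth choice of phase (e.g. by a parallel-transport / real-gauge condition) so that $\frac{d}{d\alpha}\ket{\psi_0}$ is well-defined, and note that the term proportional to $\ket{\psi_0}$ in its derivative is orthogonal-projected away in $\langle V\rangle_\alpha$ anyway, so the final inequality is phase-independent.
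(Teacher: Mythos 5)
Your proof is correct, but it takes a genuinely different route from the paper's. You reduce both claims to one via the Hellmann--Feynman identity $\frac{d^2E_0}{d\alpha^2} = \frac{d}{d\alpha}\langle V\rangle_\alpha$ and then evaluate that derivative explicitly using the first-order perturbation-theory formula for $\frac{d}{d\alpha}\ket{\psi_0(\alpha)}$, obtaining the manifestly nonpositive sum $-2\sum_{k\neq 0}|\bra{\psi_k}V\ket{\psi_0}|^2/(E_k-E_0)$. The paper instead avoids perturbation theory entirely: it uses the variational principle to show that $E_0(\alpha) \leq \langle H_0\rangle_{\alpha_0} + \alpha\langle V\rangle_{\alpha_0}$, recognizes the right-hand side (again via Hellmann--Feynman) as the tangent line to $E_0$ at $\alpha_0$, and concludes that $E_0$ is concave because it lies below all of its tangent lines; differentiating Hellmann--Feynman then transfers the concavity to $\langle V\rangle_\alpha$. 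The paper's argument is more elementary and needs less regularity input --- it never requires the eigenvector derivative, the sum-over-states formula, or a phase-fixing convention, and the tangent-line characterization gives concavity of $E_0$ as a global statement. Your argument requires the extra machinery you correctly flag (analytic dependence of a simple eigenvalue and eigenvector, a smooth gauge choice, and strict positivity of the gap $E_k - E_0$), but in exchange it yields a quantitative formula: the second derivative equals $-2\sum_{k\neq 0}|\bra{\psi_k}V\ket{\psi_0}|^2/(E_k-E_0)$, so you can read off exactly when the inequality is tight (namely when $V\ket{\psi_0}\propto\ket{\psi_0}$), which the paper's softer argument does not provide. Both proofs are valid; just be sure, if you write yours up, to state the regularity hypotheses explicitly as you propose, since the lemma as stated allows arbitrary Hermitian $H_0$ and $V$ and the nondegeneracy assumption is doing real work in both the gap positivity and the differentiability.
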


\begin{proof}
By the variational principle,
\begin{equation}
E_0(\alpha) \leq \bra{\psi(\alpha_0)} H(\alpha) \ket{\psi(\alpha_0)}
\end{equation}
for any $\alpha_0$. Expanding this yields
\begin{equation}
\label{tangent}
E_0(\alpha) \leq \langle H_0 \rangle_{\alpha_0} + \alpha \langle V
\rangle_{\alpha_0}.
\end{equation}
By the Hellman-Feynman theorem
\begin{equation}
\label{FH}
\left. \frac{d E_0}{d\alpha} \right|_{\alpha_0} = \langle V \rangle_{\alpha_0}.
\end{equation}
Thus the righthand side of \eq{tangent} is identifiable as the tangent
line to $E_0(\alpha)$ at $\alpha_0$. The fact that $E_0(\alpha)$ lies
below its tangent line at every point implies
\begin{equation}
\label{concave}
\frac{d^2 E_0}{d \alpha^2} \leq 0.
\end{equation}
Taking a derivative of \eq{FH} yields
\begin{equation}
\label{penultimate}
\left. \frac{d^2 E_0}{d \alpha^2} \right|_{\alpha_0} = \left. 
\frac{d}{d \alpha} \langle V \rangle_\alpha \right|_{\alpha_0}.
\end{equation}
Together, \eq{penultimate} and \eq{concave} yield
\begin{equation}
\frac{d}{d\alpha} \langle V \rangle_\alpha \leq 0.
\end{equation}
\end{proof}

From \eq{pzero}, we find that the ground state of $H_0(s=1)$,
which is the ground state of $H_1(s=1/2)$ satisfies
\begin{equation}
| \braket{\psi_0(1/2)}{0\ldots0} |^2 > \frac{1}{2} \quad \forall n.
\end{equation} 
By Lemma \ref{lechatelier}, the amplitude in the all zeros state will
monotonically increase as $s$ is increased beyond $1/2$. Thus,
\begin{equation}
\label{bigzero}
| \braket{\psi_0(s)}{0\ldots0} |^2 > \frac{1}{2} \quad
\textrm{$\forall n$ and $\forall s \geq \frac{1}{2}$.}
\end{equation}
With \eq{bigzero} in hand, are now prepared to prove that the
eigenvalue gap $\gamma_1(s)$ of $H_1(s)$ monotonically increases for
$s \geq 1/2$.

Let $\ket{\psi_1}$ denote the first excited state of $H_1$. By the
Hellman-Feynman theorem
\begin{equation}
\frac{d \gamma_1}{ds} = \bra{\psi_1} \frac{dH}{ds} \ket{\psi_1} -
\bra{\psi_0} \frac{dH}{ds} \ket{\psi_0}.
\end{equation}
Substituting in the expression for $H_1(s)$ for $s \geq 1/2$, given by
\eq{h1}, \eq{schedule1}, and \eq{schedule2} yields
\begin{equation}
\label{minus}
\frac{d \gamma_1}{ds} = 2 a \left( |
  \braket{\psi_0}{0\ldots0}|^2 - | \braket{\psi_1}{0\ldots0}|^2
\right).
\end{equation}
By \eq{bigzero},  $|\braket{\psi_0}{0\ldots0}|^2 \geq 1/2$ for all
$s \geq 1/2$. By the orthogonality of $\bra{\psi_0}$ and
$\bra{\psi_1}$, this implies $|\braket{\psi_1}{0\ldots0}|^2
\leq 1/2$ for all $s \geq 1/2$. Consequently, \eq{minus} yields
\begin{equation}
\frac{d \gamma_1}{ds} \geq 0 \quad \forall s \geq 1/2,
\end{equation}
and therefore the minimum gap in this stage of the adiabatic process
occurs at $s=1/2$ and is as given in \eq{gapval}. (Note that gap at
$s=0$ is $2/n$, and this is the minimum gap for the whole adiabatic
process.)

\section{Permutation Symmetry and Efficient Spectrum Calculation}
\label{symmetry}

In this appendix, we prove that the eigenvalue gap of the $2^n \times
2^n$ matrix $H_1(s)$ is equal to the eigenvalue gap of the
$(n+1)\times(n+1)$ block of $H_1(s)$ that acts on the
permutation-symmetric subspace. Consequently, we can numerically
calculate both the eigenvalue gap of $H_1(s)$ and the ground state
eigenvector in $O(n^3)$ time. For example, the data for
Figure \ref{numerics}, which extends up to $n=2000$ was computed in
under an hour on a standard workstation. More generally, for
permutation-symmetric Hamiltonians on $n$ qubits, the degeneracies
ensure that there are only $\mathrm{poly}(n)$ distinct
eigenvalues. All of these can be computed in $\mathrm{poly}(n)$ time
using the methods outlined in the supplemental material of \cite{Boixo4}.

\begin{proposition}
Let $H$ be a Hamiltonian of the form
\begin{equation}
H_a = -c \sum_{j=1}^n X_j + b \sum_{j=1}^n Z_j - a
\ket{0\ldots0}\bra{0\ldots0}
\end{equation}
where $a \geq 0$. Then the eigenvalue gap of $H$ is the eigenvalue gap
of the $(n+1) \times (n+1)$ block of $H$ acting on the
permutation-symmetric subspace.
\end{proposition}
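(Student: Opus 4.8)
The plan is to exploit the permutation symmetry of $H_a$ to block-diagonalize it, and then to show that \emph{both} the ground state and the first excited state of $H_a$ lie in the $(n+1)$-dimensional permutation-symmetric subspace $\mathcal{S}$ (the span of the Dicke states). Granting this, the conclusion is immediate: the spectrum of $H_a$, counted with multiplicity, is the disjoint union of the spectra of $H_a|_{\mathcal{S}}$ and $H_a|_{\mathcal{S}^\perp}$, so once the two lowest eigenvalues of $H_a$ are both attained inside $\mathcal{S}$, the gap of $H_a$ equals the gap of the $(n+1)\times(n+1)$ symmetric block.

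First I would record the structural facts and the unperturbed spectrum. Each of $\sum_j X_j$, $\sum_j Z_j$, and $P=\ket{0\ldots0}\bra{0\ldots0}$ commutes with every qubit permutation, so $\mathcal{S}$ and $\mathcal{S}^\perp$ are $H_a$-invariant; and since $\ket{0\ldots0}\in\mathcal{S}$, the term $-aP$ annihilates $\mathcal{S}^\perp$, so $H_a$ restricted to $\mathcal{S}^\perp$ is $a$-independent and equals $H_0|_{\mathcal{S}^\perp}$, where $H_0$ denotes the $a=0$ Hamiltonian; also $-aP\preceq 0$ for $a\ge 0$ gives $H_a\preceq H_0$. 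Now $H_0=\sum_j(-cX_j+bZ_j)$ is conjugated by a uniform single-qubit rotation $u^{\otimes n}$ to $-\mu\sum_j Z_j$ with $\mu=\sqrt{b^2+c^2}$, whose eigenspaces $E_0,\dots,E_n$ have strictly increasing energies $E^{(0)}<E^{(1)}<\cdots$, with $\dim E_k=\binom{n}{k}$ and $\dim E_0=1$. Because $u^{\otimes n}$ (hence also $H_0$) preserves $\mathcal{S}$, the intersection $E_k\cap\mathcal{S}$ is exactly one-dimensional for each $k$ (the image of the weight-$k$ Dicke state), so $E_0\subseteq\mathcal{S}$ and, for $n\ge 2$, $\dim(E_1\cap\mathcal{S}^\perp)=n-1\ge 1$. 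This yields $\lambda_1(H_0|_{\mathcal{S}})=E^{(0)}$, $\lambda_2(H_0|_{\mathcal{S}})=E^{(1)}$, and $\lambda_1(H_0|_{\mathcal{S}^\perp})=E^{(1)}$.

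Then I would assemble the comparison. Applying the Courant--Fischer min--max principle \emph{inside} $\mathcal{S}$ together with $H_a\preceq H_0$ gives $\lambda_1(H_a|_{\mathcal{S}})\le E^{(0)}$ and $\lambda_2(H_a|_{\mathcal{S}})\le E^{(1)}$. Since $E^{(0)}<E^{(1)}=\lambda_1(H_0|_{\mathcal{S}^\perp})=\lambda_1(H_a|_{\mathcal{S}^\perp})$, the ground state of $H_a$ has energy strictly below $\lambda_1(H_a|_{\mathcal{S}^\perp})$ and therefore lies in $\mathcal{S}$; it is moreover nondegenerate (for $c>0$ by Perron--Frobenius, since $-H_a$ is a nonnegative irreducible matrix in the computational basis, or directly from the product structure at $a=0$). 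Consequently $\lambda_2(H_a)=\min\{\lambda_2(H_a|_{\mathcal{S}}),\,\lambda_1(H_a|_{\mathcal{S}^\perp})\}=\lambda_2(H_a|_{\mathcal{S}})$, so the eigenvalue gap of $H_a$ equals $\lambda_2(H_a|_{\mathcal{S}})-\lambda_1(H_a|_{\mathcal{S}})$, the gap of the symmetric block.

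The step I expect to be the main obstacle is showing that the \emph{first excited} state, not merely the ground state, stays in $\mathcal{S}$ after $-aP$ is turned on --- equivalently, that $\lambda_2(H_a|_{\mathcal{S}})\le\lambda_1(H_a|_{\mathcal{S}^\perp})$. This is delicate precisely because the inequality holds with \emph{equality} at $a=0$; what rescues it is that increasing $a\ge 0$ only lowers the left-hand side (via $H_a\preceq H_0$ and min--max restricted to $\mathcal{S}$) while the right-hand side, living in $\mathcal{S}^\perp$ where $-aP$ acts as zero, does not move. A secondary point needing care is the degenerate corners ($c=0$, $b=0$, or $n=1$), where ``eigenvalue gap'' should be interpreted appropriately and the Perron--Frobenius step may be replaced by the explicit product structure of $H_0$.
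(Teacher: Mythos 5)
Your proof is correct and follows essentially the same strategy as the paper's: the perturbation $-a\ket{0\ldots0}\bra{0\ldots0}$ acts only within the permutation-symmetric block and can only lower its eigenvalues (Weyl/min--max), while at $a=0$ the two lowest eigenvalues already lie in that block, so they remain there for all $a\ge 0$. The only cosmetic difference is that you observe directly that the projector annihilates $\mathcal{S}^\perp$ because $\ket{0\ldots0}\in\mathcal{S}$, whereas the paper routes the same fact through the total-spin decomposition and the Hamming-weight support of the spin-$j$ blocks.
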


\begin{proof}
The ground state of $H_{a=0}$ is $\ket{\psi}^{\otimes n}$ where
$\ket{\psi}$ is the ground state of $-cX + bZ$. Similarly, the first
excited level of $H_{a=0}$ is $n$-fold degenerate and spanned by
\begin{equation}
\begin{array}{rcl}
\ket{\phi_1} & = & \ket{\bar{\psi}} \ket{\psi} \ket{\psi} \ldots \ket{\psi} \vspace{3pt} \\
\ket{\phi_2} & = & \ket{\psi} \ket{\bar{\psi}} \ket{\psi} \ldots \ket{\psi} \\
             & \vdots & \\
\ket{\phi_n} & = & \ket{\psi} \ket{\psi} \ket{\psi} \ldots \ket{\bar{\psi}}
\end{array}
\end{equation}
where $\ket{\bar{\psi}}$ is the excited state of $-cX + bZ$. Thus, one
sees that the ground state of $H_{a=0}$ is invariant under all
permutations of the qubits and the first excited eigenspace contains a
permutation-symmetric state, namely $\frac{1}{\sqrt{n}} \sum_{j=1}^n
\ket{\phi_n}$.

Now, recall some standard facts about the spin angular-momentum
operators.
\begin{equation}
\begin{array}{lclcl}
S_x = \frac{1}{2} \sum_{j=1}^n X_j & \quad &
S_y = \frac{1}{2} \sum_{j=1}^n Y_j & \quad &
S_z = \frac{1}{2} \sum_{j=1}^n Z_j \vspace{10pt}\\
S^2 = S_x^2 + S_y^2 + S_z^2 & & \end{array}
\end{equation}
The eigenvalues of $S^2$ are $j(j+1)$ for $j=0,1,2,\ldots,\frac{n}{2}$
if $n$ is even and $j = \frac{1}{2}, \frac{3}{2}, \frac{5}{2}, \ldots
\frac{n}{2}$ if $n$ is odd. $S_z$ commutes with $S^2$ and therefore
they can be simultaneously diagonalized. In the space with $S^2$
eigenvalue $j(j+1)$ (called the spin-$j$ space) the eigenvalues of
$S_z$ are $-j, -j+1,\ldots,j-1,j$. The spin-$\frac{n}{2}$ subspace is
precisely the permutation-symmetric subspace.

Examining $S_z$ one sees that this means the spin-$j$ space has
support only on bitstrings with Hamming weights $\frac{n}{2} - j,
\frac{n}{2} -j + 1, \ldots, \frac{n}{2} +j -1, \frac{n}{2} + j$. 
$S^2$ commutes with $H$. Thus, in an appropriate basis, $H$ is block
diagonal one block corresponding to each allowed value of $j$.
\begin{equation}
H_a = \begin{array}{c} \includegraphics[width=1.3in]{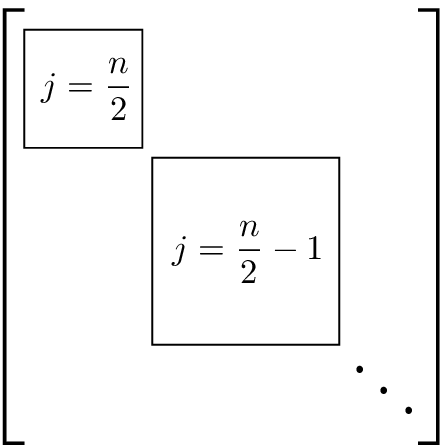} \end{array}
\end{equation}
Now imagine we start with $a=0$ and then increase $a$. The $j \neq
\frac{n}{2}$ blocks of $H$ will be completely unaffected since they
act on subspace that exclude Hamming weight zero. Only the
$j=\frac{n}{2}$ block can be affected. The eigenvalues of the operator
$-a \ket{0\ldots0}\bra{0\ldots0}$ are $-a$ and $0$. Thus, by Weyl's
inequality, adding this operator to the $j=\frac{n}{2}$ block will
lower each of its eigenvalues by some amounts between zero and $a$. The
ground energy of $H_{a=0}$ comes from the $j=\frac{n}{2}$ block, and
the first excited energy degenerately comes from the $j=\frac{n}{2}$
block. As we increase $a$ away from zero, the eigenvalues from this
block decrease (or remain constant), while the eigenvalues form the
other blocks are unchanged. Thus, the lowest two eigenvalues of $H$
will continue to be the lowest two eigenvalues of the $j=\frac{n}{2}$
block for all positive $a$.
\end{proof}

The fact that $\ket{\psi_s}$ is permutation-symmetric can be exploited
to numerically compute the ground state and ground energy up to large
numbers of qubits. Specifically, let $\ket{\psi_w}$ be the uniform
superposition over all length-$n$ bitstrings of Hamming weight $w$.
\begin{equation}
\ket{\psi_w} = \frac{1}{\sqrt{\binom{n}{w}}} \sum_{|x| = w} \ket{x}
\end{equation}
Then the ground state $\ket{\psi_s}$ can be expressed as
\begin{equation}
\ket{\psi_s} = \sum_{w=0}^n \alpha_w \ket{\psi_w}.
\end{equation}
The Hamiltonian $H_1(s)$ can be block-diagonalized with one block
corresponding to
$\mathrm{span}\{\ket{\psi_{w=0}}, \ldots, \ket{\psi_{w=n}} \}$. In
particular, a brief calculation yields 
\begin{equation}
- \sum_{j=1}^n X_j \ket{\psi_w} = - \sqrt{(w+1)(n-w)} \ket{\psi_{w+1}}
- \sqrt{w(n-w+1)} \ket{\psi_{w-1}}.
\end{equation}
Using this formula, one can easily write down an expression for the
Hamming-symmetric block of $H_1(s)$, which is a simple
$(n+1)\times(n+1)$ tri-diagonal matrix. Diagonalizing this matrix
yields the ground state eigenvalue and eigenvector, as well as the
rest of the permutation-symmetric part of the spectrum. From the
ground state eigenvector, we can calculate  $p_{s=1}^{(1)}$.

\bibliography{diffusion.bib}

\end{document}